\newtheorem*{proposition*}{Proposition}
\begin{document}
\newcommand{\mat}{\mathbf}
\renewcommand{\vec}[1]{\bar{\mat{#1}}}

\title{On Optimal Offline Time Sharing Policy for Energy Harvesting Underlay Cognitive Radio}
\author{\IEEEauthorblockN{Kalpant~Pathak and Adrish~Banerjee}
\IEEEauthorblockA{Department of Electrical Engineering, Indian Institute of Technology Kanpur, Uttar Pradesh, 208016\\
Email: \{kalpant, adrish\}@iitk.ac.in}
\vspace{0mm}
}


\maketitle

\begin{abstract}
RF energy harvesting can be used to power communication devices so that perpetual operation of such devices can be ensured. We consider a RF energy harvesting underlay cognitive radio system operating in slotted fashion. The primary user (PU) is equipped with a reliable power source and transmits with a constant power in all the slots. However, the secondary user (SU) harvests energy from primary's transmission and simultaneously transmits it's own data such that interference at the primary receiver (PR) remains below an acceptable threshold. At the secondary transmitter (ST), each time slot is divided into two phases: energy harvesting (EH) phase and information transfer (IT) phase. We formulated the problem of maximizing the achievable secondary sum rate under primary receiver's protection criteria as a convex optimization problem and obtained the optimal time sharing between EH and IT phase, and optimal secondary transmit power under offline setting. The optimal offline scheme is then compared with an online myopic policy, where the optimal time sharing between EH and IT phase, and optimal secondary transmit power are obtained based on instantaneous channel gains only.
\end{abstract}

\IEEEpeerreviewmaketitle


\section{Introduction}
In recent years, energy harvesting cognitive radio networks (EH-CRNs) has emerged as a solution to the problem of spectrum scarcity and at the same time ensures perpetual operation of the communication devices \cite{EH_CRN}. In the literature, the EH-CRNs operating in interweave, overlay and underlay mode have been studied in  \cite{interweave_1,interweave_2,interweave_3,interweave_4,interweave_5}, \cite{overlay_1,overlay_2,
overlay_3}, and \cite{overlay_4,underlay_1,underlay_2,underlay_3}, respectively.

In \cite{interweave_1}, authors considered an EH-CR network operating in interweave mode. The secondary user (SU) uses ambient radio signals and wireless power transfer to harvest RF energy. The authors used time homogeneous discrete Markov process to model the primary traffic. In each slot, the SU decides either to remain idle or to perform spectrum sensing and re-configures the detection threshold. The authors proposed a transmission policy such that expected total throughput of SU is maximized. They formulated the optimization problem as constrained partially observable Markov decision process (POMDP) and obtained a policy maximizing SU's throughput by designing the spectrum sensing policy and detection threshold jointly. In \cite{interweave_2}, authors considered the system model of \cite{interweave_1}, and used time homogeneous discrete Markov process to model the temporal correlation of the primary traffic. The authors upper bounded the achievable throughput of the SU, which is based on energy arrival rates, temporal correlation of primary traffic and detection threshold. The authors obtained an optimal detection threshold maximizing the upper bound on achievable throughput. In \cite{interweave_3}, authors considered a CR network with energy harvesting secondary user in a single user multi-channel scenario. Based on energy availability of SU, channel conditions and belief state of PU's network, authors obtained a channel selection criterion. This channel selection criterion chooses the best subset among all the available channels for sensing. Then, using the proposed criterion, the authors constructed a channel-aware optimal and myopic sensing policy. In \cite{interweave_4}, authors considered an EH-CR network where both the PU and SU have energy harvesting capability but they don't have any battery to store the harvested energy. So, if harvested energy is not used, it is discarded. Based on Markovian behavior of PU, to specify the interaction between the PUs and SUs in the system, authors adapted a hidden input Markov model. The authors proposed a two-dimensional spectrum and power sensing policy that improves the PU detection performance and estimates the primary transmit power level. In \cite{interweave_5}, authors obtained an optimal detection threshold specifying a sensing policy that maximizes the expected total throughput of energy harvesting SU. Authors showed that when arrival rate of energy is less than it's expected consumption, decreased probability of accessing occupied channel may not affect the probability of accessing the vacant spectrum.

\indent In \cite{overlay_1}, authors considered an EH-CR network operating in overlay scenario where the secondary user helps primary deliver it's data. The SU has energy harvesting capability and the authors used a discrete time queue to model the energy queue with Markov arrival and service processes. In the system model considered, when PU transmits, SU remains silent and receives some fraction of PU's data and stores it in a queue. Then, as PU becomes silent, SU relays primary data using decode and forward protocol. For the proposed system, the authors obtained inner and outer bounds on the stability region. In \cite{overlay_2}, authors considered a CR network with joint information and energy cooperation. In the system model considered, the PU not only gives information to SU for relaying but feeds it with energy also. For such cooperation, authors proposed three schemes. In first, they considered an ideal backhaul for information and energy transfer between the two systems. Then, authors proposed power and time splitting schemes enabling the joint information-energy cooperation. Finally, authors obtained zero forcing solutions for all three schemes. In \cite{overlay_3}, authors considered joint information and energy cooperation among the primary and secondary users. The SU relays primary data and gets spectrum access as a reward. Moreover, the PU feeds the SU with energy. The authors aim to maximize the primary rate subject to rate constraints of both the users. Then, authors analyzed the effects of SU rate constraint and finite battery on probability of cooperation and primary rate. In \cite{overlay_4}, authors considered a scenario where PUs harvest energy from multiple access points (APs) while SUs have fixed power supply. In the \emph{energy harvesting zone} centered at PU, nearest ST is selected to transfer energy to PU wirelessly. Also, there exist a \emph{cooperative region} between PU and AP, where an ST is selected to relay primary data based on the channel quality between ST and AP, and in return, the ST is rewarded with some fraction of bandwidth of primary channel. Under performance constraints of primary system, authors aimed to maximize the throughput of secondary systems in a given area.
\begin{figure*}[!t]
\includegraphics[width=0.8\linewidth]{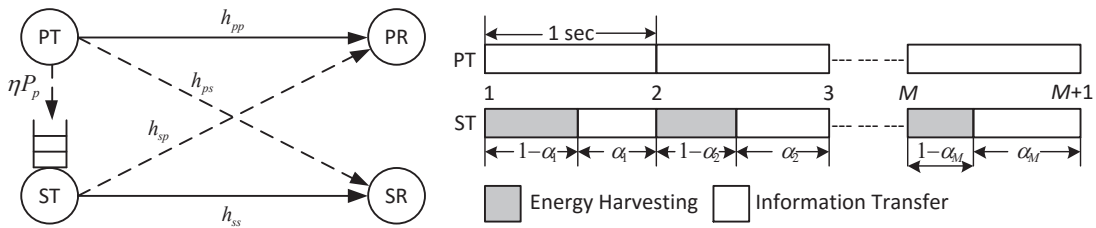}
\centering
\caption{Primary and RF energy harvesting secondary users coexisting in an underlay scenario}
\label{fig:system_model}
\vspace{-3mm}
\end{figure*}

\indent In \cite{underlay_1}, authors considered an underlay CR system where multiple PUs and energy harvesting SUs coexist. Each primary transmitter (PT) has two concentric zones, namely, \emph{guard zone} and \emph{harvesting zone}. If a secondary transmitter (ST) is located within the \emph{harvesting zone}, it harvests energy from primary transmission, if it is located outside the \emph{guard zone}, it transmits with fixed power, and otherwise remains idle. The authors considered independent homogeneous Poisson point processes (HPPP) to model PTs and STs, and maximized the spatial throughput of secondary network. In \cite{underlay_2}, authors solved the secondary system throughput maximization problem using geometric waterfilling with peak power constraints (GWFPP). It is showed that the proposed algorithm outperforms the primal-dual interior point method (PD-IPM). In \cite{underlay_3}, authors considered an underlay CR network where the SU harvests energy from PU's transmission. In each slot, some fraction of time is reserved for energy harvesting and remaining is reserved for information transfer. However, the authors assumed that the SU uses all of the harvested energy in that slot itself, which makes the optimization problem independent among the slots. The authors obtained an online myopic solution for  time sharing between the energy harvesting phase and information transfer phase which maximizes the achievable rate of CR system.

\indent In this paper, we consider an underlay CR network with one PU and one SU transmitter-receiver pair operating in slotted fashion. The PT transmits with a fixed power whereas in each slot, the ST spends some fraction of the slot for EH and remaining for IT. However in each slot, SU is not bound to use all of the energy harvested in that slot, rather it can save the energy for future use. The main contributions of this paper are as follows:
\begin{itemize}
\item First, we obtained optimal offline time sharing between EH and IT phase, and secondary transmit power policy maximizing the sum achievable throughput of ST under energy causality constraint of ST and interference constraint at primary receiver (PR). We formulate the optimization problem as convex optimization problem and obtained the solution in closed form.
\item Second, we compare the optimal offline policy with an online myopic policy where unlike the offline policy, only instantaneous channel state information (CSI) is available at ST \cite{underlay_3}. We formulate the optimization problem as convex optimization problem and decouple it into multiple parallel subproblems and obtained the optimal time sharing parameter in closed form.
\end{itemize}
The rest of the paper is organized as follows. We present the system model and problem formulation in section II. The optimal offline and the online myopic time-sharing and transmit power policies are obtained in section III, results are given in section IV and finally, we conclude in section V.

\emph{Notation}: A bold-faced symbol with a ``bar'' (e.g. $\vec{v}$) represents a vector of length $M$. $[x]^+$ represents $\max(x,0)$, and $\vec{v}\succeq0$ means that each component of vector $\vec{v}$, $v_i$ is greater than or equal to 0. The calligraphic symbols (e.g. $\mathcal{S}=\{\vec{s}_1,\ldots\vec{s}_N\}$) represents a set of $N$ vectors. And, $\vec{0}$ and $\vec{1}$ represent the vectors of all zeros and ones respectively.

\section{System Model}
We consider a scenario where a primary user (PU) and a RF energy harvesting secondary user (SU) coexist in an underlay mode as shown in Fig. \ref{fig:system_model}. Both the primary and secondary transmitters operate in slotted fashion. The PU has a reliable power supply whereas the SU harvests RF energy from PU's transmission and stores it in an infinite sized battery. We assume that the PU transmits with a constant power of $P_p$ in each slot. However in each slot, the secondary transmitter (ST) uses some fraction of the slot for harvesting energy from primary transmission and transmits in the remaining amount of time such that the interference at primary receiver remains below an acceptable threshold $P_{int}$. The objective is to maximize the sum achievable throughput of ST by the end of $M$ slots under energy causality constraints of SU and interference constraint of PU.\\
\begin{figure*}[!t]
\begin{align}
E_s^{i*}=\alpha_i^*\left[\frac{1}{\ln2(\sum_{j=i}^M\lambda_j^*+\gamma_i^*h_{sp}^i)}-\frac{1}{\theta_i}\right]^+\text{and }
\alpha_i^*=\theta_iE_s^{i*}\left[\frac{1}{\ln2\left(\log_2(1+\theta_i\beta_i^*)-\sum_{j=i}^M\eta P_p\lambda_j^*+\gamma_i^*P_{int}-\mu_i^*\right)}-1\right]^+\label{eq:opt_sol}
\end{align}
where $\beta_i^*=\left[\frac{1}{\ln2(\sum_{j=i}^M\lambda_j^*+\gamma_i^*h_{sp}^i)}-\frac{1}{\theta_i}\right]^+$ and $\theta_i=\frac{h_{ss}^i}{\sigma_s^2+h_{ps}^iP_p}$.\\
\noindent\rule{\linewidth}{0.4pt}
\vspace{-.8cm}
\end{figure*}
\indent In the system model, all the channel links are assumed to be i.i.d. Rayleigh faded with variances $\sigma_{pp}^2,\sigma_{ps}^2,\sigma_{sp}^2$ and $\sigma_{ss}^2$ for PT-PR, PT-SR, ST-PR and ST-SR link respectively, so the channel power gains $h_{pp},h_{ps},h_{sp}$ and $h_{ss}$ are i.i.d. exponentially distributed. Since we have considered an offline policy, we assume that complete channel state information (CSI) is known at ST non-causally as in \cite{non_causal_csi}. The noise at both the receivers is assumed to be zero mean additive white Gaussian with variances $\sigma_p^2$ and $\sigma_s^2$ at PR and SR respectively. Let $\alpha_i$ be the time sharing parameter between EH and IT phase such that in slot $i$, the ST harvests for $(1-\alpha_i)$ fraction of time and transmits for $\alpha_i$ fraction of time. The slot length is assumed to be 1 second without loss of generality. Let $P_s^i$ be the transmit power of ST in $i$th slot, the achievable rate of SU in bits/seconds/Hz is given by Shannon capacity formula:
\begin{align*}
R_i(\alpha_i,P_s^i)=\alpha_i\log_2\left(1+\frac{h_{ss}^iP_s^i}{\sigma_s^2+h_{ps}^iP_p}\right),\; i=1,\ldots,M.
\end{align*}
The optimization problem is given as:
\begin{subequations}
\begin{align}
\max_{\vec{\pmb{\alpha}},\vec{P}_s} \quad & \sum_{i=1}^M \alpha_i\log_2\left(1+\frac{h_{ss}^iP_s^i}{\sigma_s^2+h_{ps}^iP_p}\right)\label{eq:opt_1}\\
\text{s.t.}\quad & \sum_{j=1}^i\alpha_jP_s^j\leq \sum_{j=1}^i(1-\alpha_j)\eta P_p,\quad i=1,\ldots,M \label{eq:opt_2}\\
& \qquad\qquad \text{(Energy causality constraint of SU)} \nonumber\\
& h_{sp}^iP_s^i\leq P_{int}, \quad i=1,\ldots,M \label{eq:opt_3}\\
&\qquad\qquad\qquad\text{(Interference constraint of PU)} \nonumber\\
& \vec{P}_s\succeq \vec{0},\,\vec{0}\preceq \vec{\pmb{\alpha}}\preceq \vec{1} \label{eq:opt_4}
\end{align}
\end{subequations}
where $0\leq\eta\leq1$ and $P_{int}$ are the energy harvesting efficiency of SU and interference constraint of PR respectively. The constraint (\ref{eq:opt_2}) is the energy causality constraint of ST, which states that the total energy consumed by the end of slot $i$ must be less than or equal to total energy harvested upto slot $i$.
\section{Optimal Time Sharing and Transmit Policy}
\indent The problem (\ref{eq:opt_1})-(\ref{eq:opt_4}) is not a convex optimization problem as the optimization variables $\vec{P}$ and $\vec{\pmb{\alpha}}$ appear in product form. Let $E_s^i$ denotes the energy consumed by ST in $i$th slot so that $E_s^i=\alpha_iP_s^i$. Using change of variable $P_s^i=\frac{E_s^i}{\alpha_i}$, the optimization problem can be rewritten as:
\begin{subequations}
\begin{align}
\max_{\vec{\pmb{\alpha}},\vec{E}_s} \quad & \sum_{i=1}^M \alpha_i\log_2\left(1+\frac{h_{ss}^iE_s^i}{\alpha_i(\sigma_s^2+h_{ps}^iP_p)}\right)\label{eq:convex_opt_1}\\
\text{s.t.}\quad & \sum_{j=1}^iE_s^j\leq \sum_{j=1}^i(1-\alpha_j)\eta P_p,\quad i=1,\ldots,M \label{eq:convex_opt_2}\\
& \qquad\qquad \text{(Energy causality constraint of SU)} \nonumber\\
& h_{sp}^iE_s^i\leq \alpha_i P_{int}, \quad i=1,\ldots,M \label{eq:convex_opt_3}\\
&\qquad\qquad\qquad\text{(Interference constraint of PU)} \nonumber\\
& \vec{E}_s\succeq \vec{0},\,\vec{0}\preceq \vec{\pmb{\alpha}}\preceq \vec{1} \label{eq:convex_opt_4}
\end{align}
\end{subequations}
which is a convex optimization problem as the objective function is negative of sum of relative entropies $D(p_i||q_i)$ where $p_i=\alpha_i$ and $q_i=\alpha_i+\frac{h_{ss}^iE_s^i}{\sigma_s^2+h_{ps}^iP_p}$, and the constraints are affine inequalities. Hence, it can be solved efficiently using CVX. Firstly, we propose the necessary conditions our optimal policy must satisfy and then, we will obtain the optimal offline solution.
\subsection{Optimality Conditions}
\begin{proposition*}
The optimal time sharing and transmit power policy must satisfy $\sum_{i=1}^ME_s^{i*}= \sum_{i=1}^M(1-\alpha^*_i)\eta P_p$, i.e., the optimal policy must use all the harvested energy by end of the transmission.
\end{proposition*}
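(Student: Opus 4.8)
The plan is to argue by contradiction. Suppose the optimal policy $(\vec{\pmb{\alpha}}^*,\vec{E}_s^*)$ leaves residual energy, i.e. $\sum_{i=1}^M E_s^{i*} < \sum_{i=1}^M(1-\alpha_i^*)\eta P_p$, and let $\Delta>0$ be the slack in the terminal energy causality constraint \eqref{eq:convex_opt_2} at $i=M$. I would then construct a feasible perturbation of the optimal point that strictly increases the objective \eqref{eq:convex_opt_1}, contradicting optimality. The engine of the argument is two monotonicity facts: for fixed $\alpha_i$ the per-slot rate is strictly increasing in $E_s^i$ (its derivative $\frac{1}{\ln2}\cdot\frac{h_{ss}^i}{\sigma_s^2+h_{ps}^iP_p+h_{ss}^iE_s^i/\alpha_i}$ is positive), and for fixed $E_s^i$ it is strictly increasing in $\alpha_i$, since the map $\alpha\mapsto\alpha\log_2(1+c/\alpha)$ has positive derivative owing to $\ln(1+x)>x/(1+x)$ for $x>0$.

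The structural observation that makes the terminal slot special is that both $E_s^M$ and $\alpha_M$ enter only the cumulative sum of the $i=M$ constraint and appear in no earlier causality constraint \eqref{eq:convex_opt_2}. Hence a perturbation confined to slot $M$ leaves all constraints with $i<M$ untouched and only needs to respect the slack $\Delta$ together with the per-slot interference cap \eqref{eq:convex_opt_3}. If $\alpha_M^*<1$, I would raise $\alpha_M^*$ by a small $\delta>0$: this relaxes the interference bound $h_{sp}^ME_s^M\le\alpha_MP_{int}$, strictly increases the rate by the second monotonicity fact, and reduces the terminal harvested energy by only $\eta P_p\,\delta$, which stays feasible for $\delta\le\Delta/(\eta P_p)$, yielding a contradiction. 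If instead $\alpha_M^*=1$ but the interference constraint in slot $M$ is slack, I would raise $E_s^{M*}$ by $\delta$ bounded by both $\Delta$ and the interference slack, which again strictly increases the objective while keeping every constraint feasible.

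The remaining, and I expect hardest, case is when slot $M$ is simultaneously saturated in interference and has $\alpha_M^*=1$, so that no terminal-slot perturbation is admissible and the residual energy $\Delta$ must be absorbed by some earlier slot $k<M$. The delicacy here is that increasing $E_s^k$ (or $\alpha_k$) tightens every cumulative causality constraint from $k$ through $M$, not just the terminal one, so the perturbation size must be taken as the minimum of the slacks of all affected constraints. I would close this case by choosing $k$ to be a slot that still has room (interference slack or $\alpha_k^*<1$) and for which the intervening constraints retain positive slack, and then invoke the same monotonicity to conclude that the objective strictly increases. This bookkeeping over the intermediate causality constraints is the only genuinely delicate point; the two monotonicity facts above do the substantive work.
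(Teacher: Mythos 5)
Your overall strategy---contradiction plus monotonicity of the per-slot rate in $E_s^i$ and $\alpha_i$---is the same idea as the paper's proof, which is only a two-sentence sketch (residual energy means more energy could have been consumed earlier, raising the rate). Your first two cases are handled correctly, and your bookkeeping of which constraints each perturbation touches is sound. But there is a genuine gap exactly where you flag the ``delicate point'': in the case $\alpha_M^*=1$ with the slot-$M$ interference constraint tight, you \emph{assert} the existence of an earlier slot $k$ that ``still has room'' and ``for which the intervening constraints retain positive slack,'' but you never prove such a slot exists. That existence claim is the entire content of this case; without it no contradiction is derived, since a priori every slot with room might sit behind a tight cumulative causality constraint that blocks any perturbation.

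The gap is fillable. Let $i_0<M$ be the largest index at which the causality constraint \eqref{eq:convex_opt_2} is tight (take $i_0=0$ if none is tight); since the $i=M$ constraint has slack $\Delta>0$, we have $i_0\le M-1$, and by maximality of $i_0$ every constraint indexed $i_0+1,\dots,M$ has strictly positive slack. Subtracting the tight constraint at $i_0$ from the slack one at $M$ gives $\sum_{j=i_0+1}^M E_s^{j*}<\sum_{j=i_0+1}^M(1-\alpha_j^*)\eta P_p$; the left side is nonnegative, so the right side is positive, hence some $k\in\{i_0+1,\dots,M\}$ has $\alpha_k^*<1$, and in fact $k<M$ because $\alpha_M^*=1$ in this case. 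This $k$ is precisely the slot you need: every causality constraint it affects has positive slack, and $\alpha_k^*<1$ provides the room. One further small repair: your strict-monotonicity fact in $\alpha_i$ requires $E_s^{i*}>0$ (at $E_s^{i*}=0$ the slot rate is identically zero as a function of $\alpha_i$), so in the corner cases $E_s^{k*}=0$ (and likewise $E_s^{M*}=0$ in your first case) you must perturb $\alpha_k$ and $E_s^k$ jointly: raising $\alpha_k$ first creates interference slack, after which a small increase in $E_s^k$ makes that slot's rate strictly positive. With these two additions your argument is complete---and considerably more rigorous than the paper's own proof, which ignores the interference constraint and the cumulative coupling across slots altogether.
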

\begin{proof}
We prove this using contradiction. Let $\{\vec{E}'_s,\vec{\pmb{\alpha}}'\}$ be an optimal policy in which we have some residual energy remaining in the battery by the end of transmission, i.e., $\sum_{j=1}^ME_s^{'j}<\sum_{j=1}^M(1-\alpha'_j)\eta P_p$. Since the objective is a concave and monotonically increasing function of consumed energy, we could have consumed more energy in previous slots without violating the energy causality constraint. This way, we can also increase the transmission power. This would result in higher achievable rate, hence contradicts with our consideration of optimality.
\end{proof}

\begin{figure*}[!th]
\centering
\begin{minipage}{0.3\linewidth}
\centering
  \includegraphics[width=2.3in]{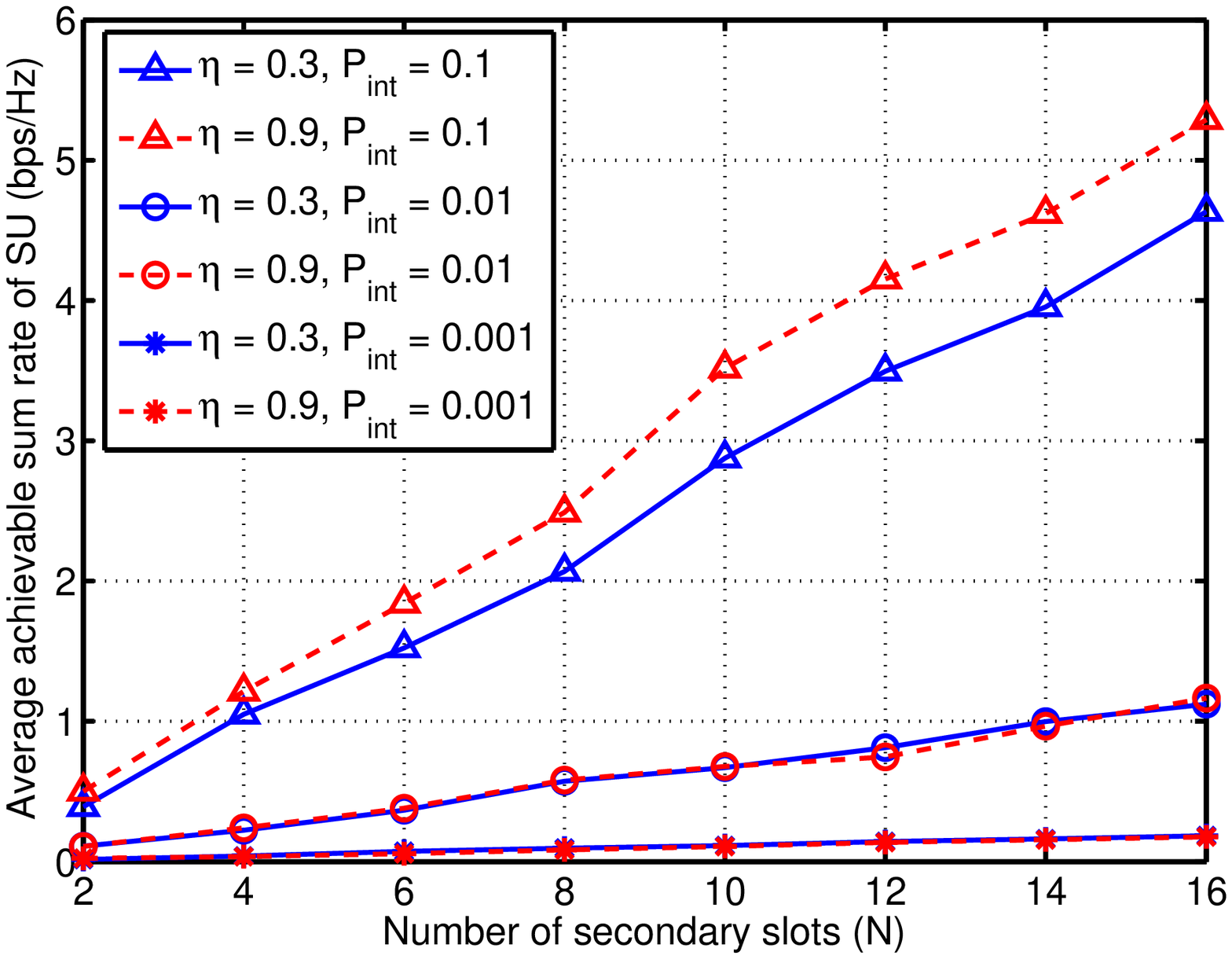}
  \caption{Average achievable sum rate of ST ($\vec{R}$) versus number of secondary slots ($N$).}
  \label{fig:opt_plot}
\end{minipage}
\hspace{3mm}
\begin{minipage}{0.3\linewidth}
\centering
  \includegraphics[width=2.3in]{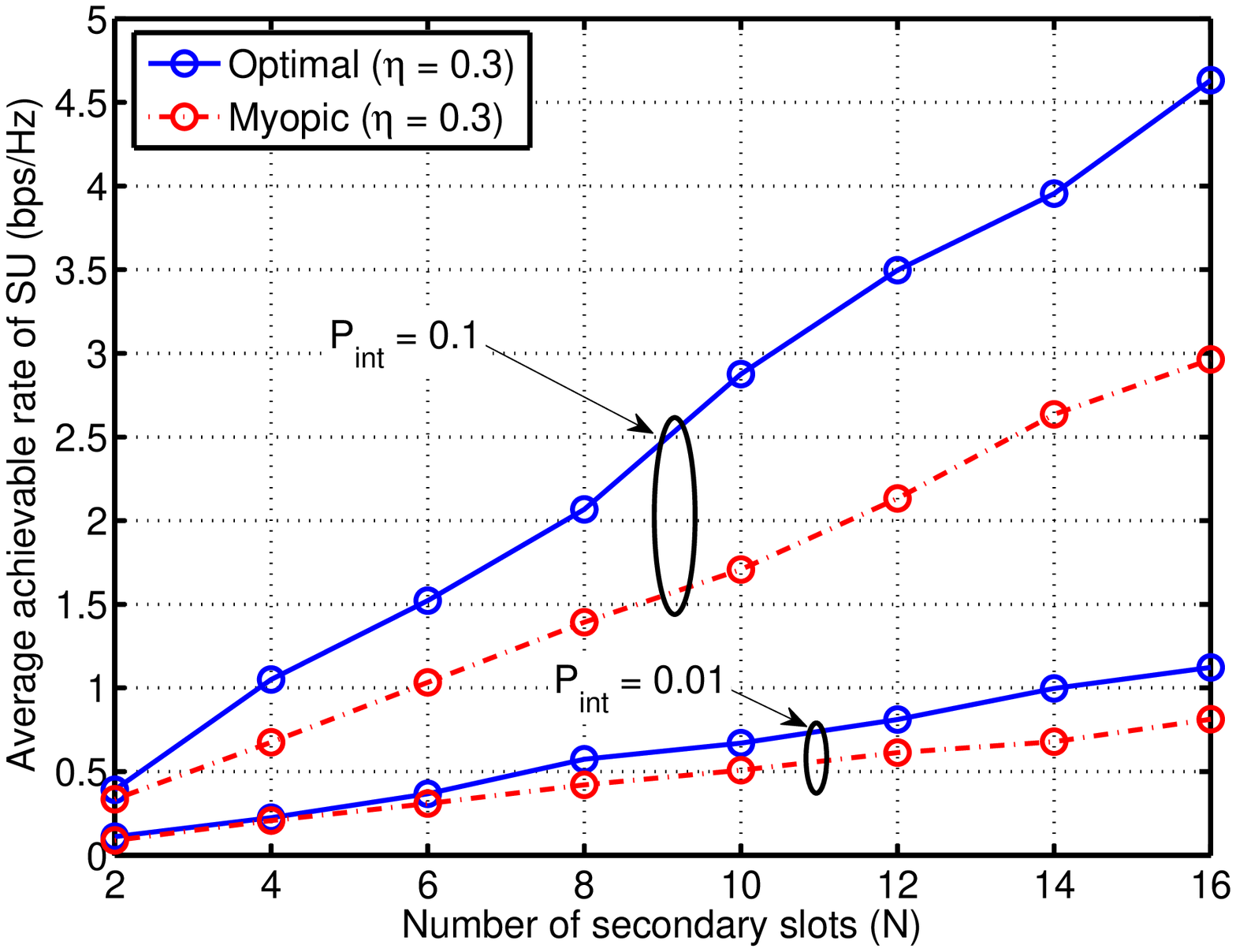}
  \caption{Average achievable sum rate of optimal and online myopic policies.}
  \label{fig:comparison}
\end{minipage}
\hspace{3mm}
\begin{minipage}{0.3\linewidth}
\centering
\centering
  \includegraphics[width=2.3in]{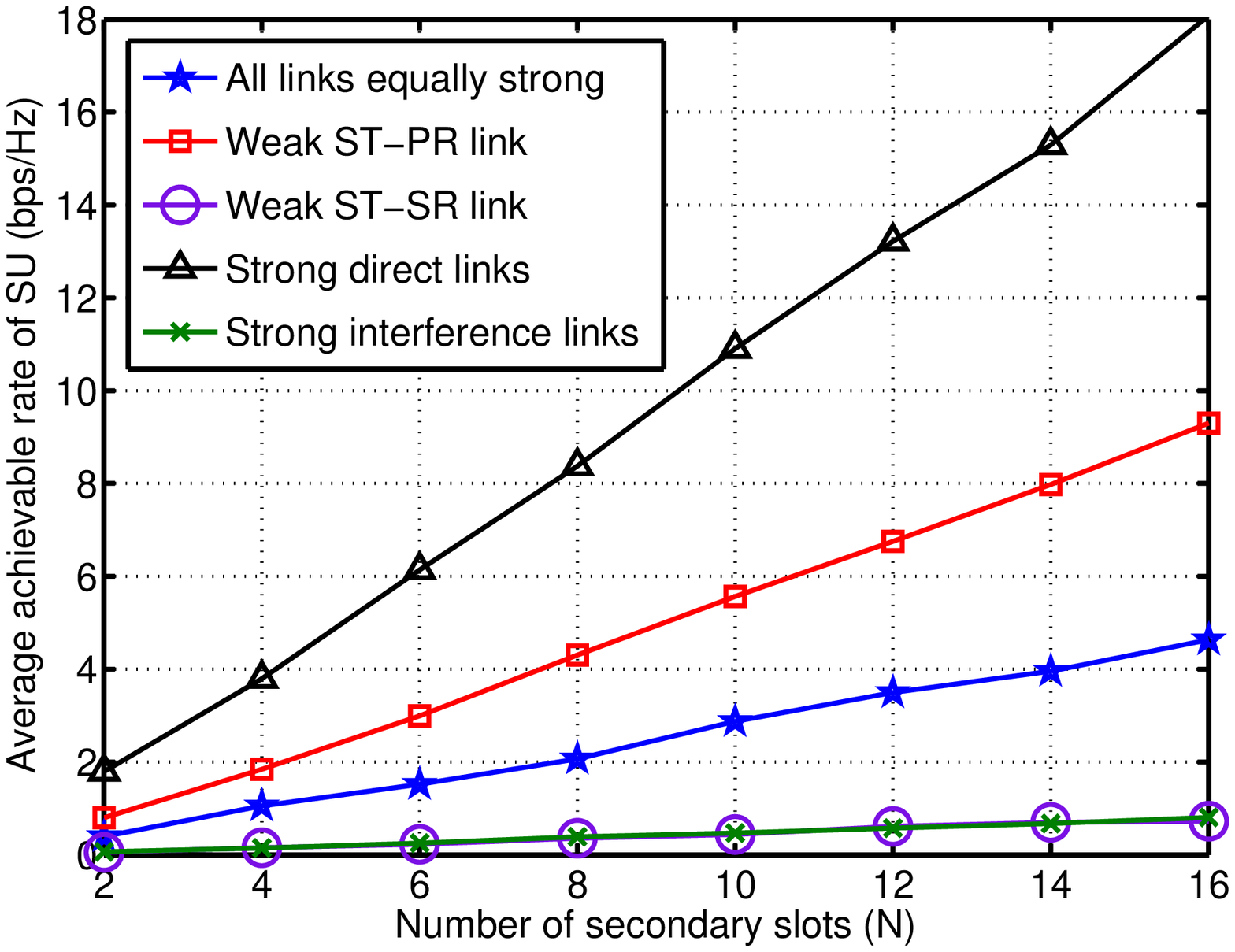}
  \caption{Average achievable rate of ST ($\vec{R}$) versus number of secondary slots ($N$) under different channel conditions ($\eta = 0.3$ and $P_{int}=0.1$).}
  \label{fig:diff_chan}
\end{minipage}
\vspace{-0.3cm}
\end{figure*}
\subsection{Optimal Solution}
The Lagrangian of the problem (\ref{eq:convex_opt_1})-(\ref{eq:convex_opt_4}) is given as:
\begin{align}
\mathcal{L}(\vec{x},\vec{y})= &-\sum_{i=1}^M\alpha_i\log_2\left( 1+ \frac{h_{ss}^iE_s^i}{\alpha_i(\sigma_s^2+h_{ps}^iP_p)}\right)\nonumber \\
&+\sum_{i=1}^M \lambda_i \left[\sum_{j=1}^iE_s^j-\sum_{j=1}^i(1-\alpha_j)\eta P_p \right]\nonumber \\
&+\sum_{i=1}^M\gamma_i[h_{sp}^iE_s^i-P_{int}\alpha_i]+\sum_{i=1}^M\mu_i(\alpha_i-1) \label{eq:lagrangian}
\end{align}
where $\vec{\pmb{\lambda}},\vec{\pmb{\gamma}}$ and $\vec{\pmb{\mu}}$ are the dual variables for the constraints (\ref{eq:convex_opt_2}), (\ref{eq:convex_opt_3}) and (\ref{eq:convex_opt_4}) respectively and $\vec{x}\in\mathcal{X}=\{\vec{E}_s,\vec{\pmb{\alpha}}\}$, and $\vec{y}\in\mathcal{Y}=\{\vec{\pmb{\lambda}},\vec{\pmb{\gamma}},\vec{\pmb{\mu}}\}$. The
Karush-Kuhn-Tucker (KKT) stationarity conditions are:
\begin{align}
\frac{\frac{\theta_iE_s^{i*}}{\alpha_i^*}}{\ln2\cdot\left(1+\frac{\theta_iE_s^{i*}}{\alpha_i^*}\right)}-\log_2\left(1+\frac{\theta_iE_s^{i*}}{\alpha_i^*}\right)+\sum_{j=i}^M\eta P_p\lambda_j^*\nonumber\\
-P_{int}\gamma_i^*+\mu_i^*&=0 \label{eq:kkt_1}\\
-\frac{\theta_i}{\ln2\left(1+\frac{\theta_iE_s^{i*}}{\alpha_i^*}\right)}+\sum_{j=i}^M\lambda_j^*+\gamma_i^*h_{sp}^i&=0\label{eq:kkt_2}
\end{align}
where $\theta_i=\frac{h_{ss}^i}{\sigma_s^2+h_{ps}^iP_p}$. And the complementary slackness conditions are
\begin{align}
\lambda_i^*\left[\sum_{j=1}^iE_s^{j*}-\sum_{j=1}^i(1-\alpha_j^*)\eta P_p\right]&=0\\
\gamma_i^*[h_{sp}^iE_s^{i*}-\alpha_i^*P_{int}]&=0\\
\mu_i^*[\alpha_i^*-1]&=0
\end{align}
for all $i=1,\ldots,M$. To make computation more tractable, we neglect the dual variables associated with non-negativity constraints of $\vec{E}_s$ and $\vec{\pmb{\alpha}}$. We can include these constraints later by projecting the obtained results onto positive orthant. From (\ref{eq:kkt_1}) and (\ref{eq:kkt_2}) we obtain the optimal solution given in eq. (\ref{eq:opt_sol}).

Since the objective function is strictly concave, the optimal $\vec{E}_s$ and $\vec{\pmb{\alpha}}$ are unique. Since $P_s^{i*}=\frac{E_s^{i*}}{\alpha_i^*}$, the optimal transmit power of ST in $i$th slot is given as:
\begin{align}
P_s^{i*}=\left\{
\begin{array}{cl}
\left[\frac{1}{\ln2(\sum_{j=i}^M\lambda_j^*+\gamma_i^*h_{sp}^i)}-\frac{1}{\theta_i}\right]^+, & 0<\alpha_i\leq1\\
0, & \alpha_i=0
\end{array}
\right.\label{eq:opt_pwr}
\end{align}
\subsection{Online Myopic Policy}
\indent To compare our proposed scheme, we consider an online myopic policy as in \cite{underlay_3}, in which only instantaneous channel gains are known at ST. However, unlike \cite{underlay_3}, we consider rather more strict constraint for interference at PR. In our system model, we consider that in each slot, interference at PR must be less than the threshold $P_{int}$, whereas in \cite{underlay_3}, authors considered that the probability of interference being greater than threshold should be arbitrarily small. As in \cite{underlay_3}, in each slot the ST uses all the harvested energy in the same slot. Thus, in $i$th slot, the consumed energy will be $E_s^i=(1-\alpha_i)\eta P_p$. The optimal time sharing parameter $\vec{\pmb{\alpha}}$, which maximizes the short term secondary sum rate, can be obtained by solving the following optimization problem:
\vspace{-0.1cm}
\begin{subequations}
\begin{align}
\max_{\vec{\pmb{\alpha}}}\quad & \sum_{i=1}^M \alpha_i \log_2\left(1+\frac{1-\alpha_i}{\alpha_i}\frac{h_{ss}^i\eta P_p}{\sigma_s^2+h_{ps}^iP_p}\right) \label{eq:subopt_1}\\
\text{s.t.}\quad & h_{sp}^i(1-\alpha_i)\eta P_p\leq\alpha_i P_{int},\quad i=1,\ldots,M.\label{eq:subopt_2}\\
& \vec{0}\prec\vec{\pmb{\alpha}}\prec\vec{1} \label{eq:subopt_3}
\end{align}
\end{subequations}
\vspace{-0.1cm}
This is a convex optimization problem as optimization function is negative of sum of relative entropies, $D(p_i||q_i)$ where $p_i=\alpha_i$ and $q_i=\alpha_i+(1-\alpha_i)\frac{h_{ss}^i\eta P_p}{\sigma_s^2+h_{ps}^iP_p}$, and the constraints are linear inequalities. Since ST uses all the harvested energy in the same slot, the optimization problem becomes independent among slots and we can decouple the problem into $M$ parallel subproblems. By rearranging the terms, constraint (\ref{eq:subopt_2}) can be written as $\alpha_i\leq\frac{h_{sp}^i\eta P_p}{P_{int}+h_{sp}^i\eta P_p}$, which is $\leq1$ unless $P_{int}=0$. Therefore, the constraints (\ref{eq:subopt_2}) and (\ref{eq:subopt_3}) can be merged together into a single constraint $0<\alpha_i\leq\frac{h_{sp}^i\eta P_p}{P_{int}+h_{sp}^i\eta P_p}$. In $i$th slot, we solve the following convex optimization problem:
\begin{subequations}
\begin{align}
\max_{\alpha_i} \quad & \alpha_i \log_2\left(1+\frac{1-\alpha_i}{\alpha_i}\frac{h_{ss}^i\eta P_p}{\sigma_s^2+h_{ps}^iP_p}\right) \label{eq:myopic_1}\\
\text{s.t.} \quad & 0<\alpha_i\leq\frac{h_{sp}^i\eta P_p}{P_{int}+h_{sp}^i\eta P_p}
\end{align}
\end{subequations}
The Lagrangian of the problem stated above is given as:
\begin{align*}
\mathcal{L}(\alpha_i,\lambda)=&-\alpha_i \log_2\left(1+\frac{1-\alpha_i}{\alpha_i}\zeta_i\right)+\lambda\left(\alpha_i-\Psi_i\right)
\end{align*}
where $\zeta_i=\frac{h_{ss}^i\eta P_p}{\sigma_s^2+h_{ps}^iP_p}$ and $\Psi_i = \frac{h_{sp}^i\eta P_p}{P_{int}+h_{sp}^i\eta P_p}$. Using the KKT optimality conditions, the closed form of optimal time sharing parameter $\alpha_i$ is given as \cite{myopic_solution}:
\begin{align}
\alpha_i^*=\max\left\{\frac{\zeta_i}{\zeta_i+z_i^*-1},\frac{h_{sp}^i\eta P_p}{P_{int}+h_{sp}^i\eta P_p}\right\}\label{eq:optimal_myopic}
\end{align}
where $z_i^*>1$ is the unique solution of following equation:
\begin{align*}
z_i\ln z_i-z_i-\zeta_i +1=0
\end{align*}
\section{Results and Discussions}
For simulation purpose, we initially assume that all channel links are i.i.d. Rayleigh distributed with variance $\sigma_{pp}^2=\sigma_{sp}^2=\sigma_{ps}^2=\sigma_{ss}^2=1$. The primary user transmits with power $P_p=2$ Watt in all the slots. The energy harvesting link is assumed to be static throughout the transmission and has an attenuation factor $\eta$. Therefore, the energy harvested by ST in each slot is $E_h^i=(1-\alpha_i)\eta P_p$ Joule. The noise at both the receivers is assumed to be zero mean additive white Gaussian with equal variances $\sigma_p^2=\sigma_s^2=0.1$. In addition, we assume that there is no initial energy available in the battery. The results for Figs. \ref{fig:opt_plot} and \ref{fig:comparison} are obtained for the system settings as stated before, whereas for Fig. \ref{fig:diff_chan}, we change the channel variances to analyze the effects of different channel conditions.
\subsection{Effects of harvesting efficiency and interference constraint}
Fig. \ref{fig:opt_plot} shows the average achievable sum rate of ST ($\vec{R}_{sum}$) averaged over different channel realizations versus number of secondary slots ($N$) for different values of interference threshold $P_{int}$ and energy harvesting efficiency $\eta$ under the optimal offline policy. The average achievable sum rate is monotonically increasing with the number of secondary slots as expected. In addition, higher $\eta$ results in higher achievable rate as with higher $\eta$, secondary harvests more energy which allows it to transmit with higher power. Also, it can be observed that as the interference constraint at PR becomes stricter (i.e. $P_{int}$ decreases), the average achievable rate of ST reduces as the interference constraint puts an upper bound on the transmit power $\vec{P}_s$. In addition, with stricter interference constraint, the rate at which $\vec{R}_{sum}$ increases w.r.t. $N$, also reduces. 
\subsection{Comparison of optimal offline and myopic policies}
Fig. \ref{fig:comparison} shows the comparison of average achievable sum rate obtained using optimal offline and online myopic policies. The plot is obtained for a fixed energy harvesting efficiency $\eta=0.3$. The figure clearly shows that the myopic policy with causal CSI performs worse than the optimal policy with non-causal CSI. This is expected as myopic policy does not have any knowledge of future channel gains and the ST needs to consume all the harvested energy in the same slot. Hence, the ST harvests only that much energy in a slot which it can consume in the same and tries to maximize the achievable rate of that slot only. It is observed that this difference in performances reduces as interference constraint becomes stricter. This is because in offline policy, with stricter constraint, the ST harvests very less amount of energy and transmits with very low power and saves very little amount of energy for future use.
\subsection{Effects of different channel conditions}
Fig. \ref{fig:diff_chan} shows the average achievable sum rate of SU under different channel conditions. The results for weak ST-PR and weak PT-SR links are obtained by choosing $\sigma_{sp}^2=0.1$ and $\sigma_{ps}^2=0.1$ for ST-PR and PT-SR links respectively, whereas the variance of other channel links are assumed to be 1. From the figure it is observed that as ST-PR link becomes weak, ST's achievable rate increases significantly. This is because weak ST-PR link is as good as loose interference constraint on PR, which allows ST to transmit with higher power yielding higher sum rate. Also, weak ST-SR link decreases the sum rate as due to interference constraint, the ST transmit power is upper bounded by $\frac{P_{int}}{h_{sp}^i}$ and small channel coefficients of ST-SR link cause the sum-rate to decrease. The effects of strong direct links ($\sigma_{pp}^2=\sigma_{ss}^2=1$ and $\sigma_{ps}^2=\sigma_{sp}^2=0.1$) and strong interference links ($\sigma_{pp}^2=\sigma_{ss}^2=0.1$ and $\sigma_{ps}^2=\sigma_{sp}^2=1$) are also shown. As expected, the system performance is much better in the case of strong direct links as compared to the case of strong interference links.
\section{Conclusion}
\indent We considered an energy harvesting underlay cognitive radio system operating in slotted fashion. We considered a case where PU is equipped with a reliable power source whereas the SU harvests energy from PU's transmission. Each secondary slot is divided into two phases: EH and IT phase. Assuming complete CSI to be available non-causally at ST, we obtained an optimal offline time sharing and transmit power policy which maximizes the sum achievable rate of SU. Also, we obtained closed form expressions for time sharing parameter $\vec{\pmb{\alpha}}$ and consumed energy $\vec{E}_s$ so that optimal values of these parameters can be found iteratively. In addition, considering only causal CSI to be available at ST, we obtained an online myopic policy maximizing the achievable sum rate of SU. We compared the performance of both the policies and it is observed that the offline policy outperforms the myopic policy. But since the myopic policy needs only causal CSI at ST, it is more practical than the offline policy.

\footnotesize
\bibliographystyle{IEEEtr} 
\bibliography{references}

\end{document}